\newtheorem{lemma}{Lemma}
\newtheorem*{definition*}{Definition}
\newtheorem{assumption}{Assumption}
\DeclareMathOperator{\argmax}{argmax} 
\DeclareMathOperator{\argmin}{argmin}
\title{ Successive Projection Algorithm Robust to Outliers } 
\date{}
\author{Nicolas Gillis\thanks{Email: nicolas.gillis@umons.ac.be. This work was supported by the European Research Council (ERC starting grant no 679515), and the Fonds de la Recherche Scientifique - FNRS and the Fonds Wetenschappelijk Onderzoek - Vlanderen (FWO) under EOS Project no O005318F-RG47.  } \\ 
Department of Mathematics and Operational Research \\ 
Facult\'e Polytechnique, Universit\'e de Mons \\ 
Rue de Houdain 9, 7000 Mons, Belgium   
	}
\begin{document}

\maketitle

\begin{abstract}
The successive projection algorithm (SPA) is a fast algorithm to tackle separable nonnegative matrix factorization (NMF). Given a nonnegative data matrix $X$, SPA identifies an index set $\mathcal{K}$ such that there exists a nonnegative matrix $H$ with $X \approx X(:,\mathcal{K})H$. SPA has been successfully used as a pure-pixel search algorithm in hyperspectral unmixing and for anchor word selection in document classification. Moreover, SPA is provably robust in low-noise settings. The main drawbacks of SPA are that it is not robust to outliers and does not take the data fitting term into account when selecting the indices in $\mathcal{K}$. In this paper, we propose a new SPA variant, dubbed Robust SPA (RSPA), that is robust to outliers while still being provably robust in low-noise settings, and that takes into account the reconstruction error for selecting the indices in $\mathcal{K}$. We illustrate the effectiveness of RSPA on synthetic data sets and hyperspectral images.   
\end{abstract}

\textbf{Keywords}: nonnegative matrix factorization, 
hyperspectral unmixing, 
pure-pixel search, 
successive projection algorithm, 
outliers.

\section{Introduction} 

Given a nonnegtive data matrix $X \in \mathbb{R}^{m \times n}_+$ and a factorization rank $r$, NMF looks for nonnegative matrices 
$W \in \mathbb{R}^{m \times r}_+$ and 
$H \in \mathbb{R}^{r \times n}_+$
such that $WH \approx X$. NMF can be used for example in image analysis, document classification and hyperspectral unmixing; see, e.g.,~\cite{cichocki2009nonnegative, xiao2019uniq} and the references therein. 
However, NMF is NP-hard in general~\cite{vavasis2009complexity}. 
Separable NMF is an NMF variant where it is assumed that $W = X(:,\mathcal{K})$ for some index set $\mathcal{K}$ of size $r$. This means that the basis matrix $W$ is contained within the data set.  
This assumption makes sense in several applications including document classification~\cite{arora2013practical} and hyperspectral unmixing~\cite{ma2014signal}. 
In hyperspectral unmixing, the separability assumption is known as the pure-pixel assumption and requires that for each material present in the image (also called endmember), at least one pixel contains only that material. 
The pure-pixel assumption has been used for a long time in the literature~\cite{ma2014signal}, but it is only rather recently that separable NMF algorithms with provable guarantees have been proposed~\cite{AGKM11}. 
Among these algorithms, the successive projection algorithm (SPA) is one of the most popular ones: it is very fast, simple to implement and robust to noise~\cite{gillis2014fast}. 
It was first introduced in~\cite{MC01} and has been rediscovered many times; see the discussion in~\cite{gillis2014and}. 
However, SPA has two main drawbacks: 
(1)~SPA is very sensitive to outliers, 
and (2)~SPA does not take directly the data fitting term into account to select the indices in $\mathcal{K}$. 
 SPA can be made robust to outliers either by properly preprocessing the data set and removing the outliers, or using a proper post-processing of the index set $\mathcal{K}$~\cite{gillis2014fast}. However, these approaches do not alleviate the second drawback of SPA. Moreover, it would be useful to have an SPA variant robust to outliers in case these pre- and/or post-processings fail to identify all outliers. 

In this paper, we propose a new variant of SPA that is robust to outliers and takes directly the data fitting term into account to select the indices in $\mathcal{K}$. Moreover, this variant retains the good properties of SPA: it is fast, simple to implement and robust to noise. 
The paper is organized as follows. 
In Section~\ref{sec:spa}, we recall how SPA works and its properties. 
In Section~\ref{sec:rspa}, we present our new SPA variant, dubbed robust SPA (RSPA), that is robust to outliers and takes the data fitting term into account in the selection step. 
In Section~\ref{sec:numexp}, we illustrate the effectiveness of this new approach on synthetic data sets and hyperspectral images.

\section{The successive projection algorithm} \label{sec:spa} 

Algorithm~\ref{alg:spa} gives the pseudocode of SPA. 
SPA sequentially identifies indices in $\mathcal{K}$ using two steps: at iteration $k$, 
given the current residual matrix $R$, 

\begin{itemize}

\item Selection step: 
add the index $k$ to $\mathcal{K}$ that maximizes \mbox{$f(R(:,k))$} where $f$ is a given function (see Assumption~\ref{fass1}). 

\item  Projection step: project the residual $R$ onto the orthogonal complement of $R(:,k)$ (step~5 of Algorithm~\ref{alg:spa}). 

\end{itemize}

\algsetup{indent=2em}
\begin{algorithm}[ht!]
\caption{SPA \label{alg:spa}}
\begin{algorithmic}[1] 
\REQUIRE Nearly separable matrix $X$ (Assumption~\ref{asssep}),  
the number $r$ of columns to be extracted, and a strongly convex function $f$ (Assumption~\ref{fass1}). 

\ENSURE Index set $\mathcal{K}$ such that 
$X\approx X(:,\mathcal{K})H$ with \mbox{$H \geq 0$}.
    \medskip 
\STATE Let $R = X$, 
$\mathcal{K} = \{\}$, $k=1$.  
\WHILE {$R \neq 0$ and $k \leq r$}   
\STATE $k^* = \argmax_k f(R(:,k))$. 
\STATE $u_j = R(:,k^*) / ||R(:,k^*)||_2$.  \vspace{0.1cm} 
\STATE $R \leftarrow (I-{u_j u_j^T})R = R - u_j (u_j^TR)$. 
\STATE $\mathcal{K} = \mathcal{K} \cup \{k^*\}$, $k = k+1$.
\ENDWHILE
\end{algorithmic}
\end{algorithm}

Let us define the class of matrices for which SPA will provably identify a subset $\mathcal{K}$ such that there exists a nonnegative matrix $H$ with $X = X(:,\mathcal{K})H$. 
\begin{assumption} \label{asssep}
The \emph{separable matrix} $X \in \mathbb{R}^{m \times n}$ can be written as 
 $X = W H = W [I_r, H'] \Pi$,  
where $W \in \mathbb{R}^{m \times r}$ has rank $r$,
$H \in \mathbb{R}^{r \times n}_+$, 
$I_r$ is the identity matrix of size $r$, 
$\Pi$ is a permutation matrix, 
and the sum of the entries of each column of $H'$ is at most one. 
\end{assumption} 

\newpage 

Let us make a few remarks:  

\begin{itemize} 
\item SPA is similar to vertex component analysis (VCA)~\cite{nascimento2005vertex}: 
the main difference is in the selection step where VCA uses a linear function, which is not robust to noise. 

\item If the sum-to-one-constraint on the columns of $H'$ is not satisfied by the input matrix $X$, it can be obtained by normalizing each column of $X$ to have uni $\ell_1$ norm~\cite{gillis2014fast}. 

\item In the absence of noise, separable NMF is equivalent to identifying the vertices of a set of points; see~\cite{ma2014signal, xiao2019uniq}. 

\end{itemize}

In the presence of bounded noise, SPA will be able to identify $\mathcal{K}$ such that $X(:,\mathcal{K}) \approx W$ (up to permutation); see~\cite{gillis2014fast} where error bounds are provided. 
For this result to hold, the function $f$ must satisfy the following assumption. 
\begin{assumption} \label{fass1} 
The function $f:\mathbb{R}^m \to \mathbb{R}_+$ is strongly convex with parameter $\mu_f > 0$, its gradient is Lipschitz continuous with constant $L_f$, and its global minimizer is the all-zero vector with $f(0) = 0$. 
\end{assumption} 

The standard variant of SPA uses $f(x) = ||x||_2^2$, and is the most robust to noise according to the analysis in~\cite{gillis2014fast} since the error bound depends on the ratio $L_f/\mu_f$. 
This ratio is the conditioning of the function $f$ and denoted $\kappa_f = L_f/\mu_f \geq 1$.  
For $f(x) = ||x||_2^2$, we have $\kappa_f = 1$. 
However, SPA remains robust in low-noise settings as long as Assumption~\ref{fass1} is satisfied. 
For example, one can choose any quadratic function $f = \frac{1}{2} x^T Q x$ where $Q$ is positive definite, and we have 
$\kappa_f = \frac{\lambda_{\max}(Q)}{\lambda_{\min}(Q)}$.  
Moreover, since the analysis of SPA is sequential, the analysis still holds if one chooses different functions $f$ to select the index to put in $\mathcal{K}$ at each step of SPA, as long as they satisfy Assumption~\ref{fass1}.

\section{Robust SPA} \label{sec:rspa} 

The main contribution of this paper is to leverage the flexibility of SPA in choosing the function $f$ 
in order to make SPA robust to outliers by taking into account the residual error during the selection step. 
Algorithm~\ref{alg:rspa}, which we refer to as robust SPA (RSPA), is our proposed robust variant of SPA. It only differs from SPA in \emph{the selection step}. 
\algsetup{indent=2em}
\begin{algorithm}[ht!]
\caption{Robust SPA \label{alg:rspa}}
\begin{algorithmic}[1] 
\REQUIRE 
Nearly separable matrix $X$ (Assumption~\ref{asssep}), 
the number $r$ of columns to be extracted, 
number of candidates $d \geq 1$, 
error norm parameter $p > 0$, 
diversification parameter $\beta > 1$. 

\ENSURE 
Index set $\mathcal{K}$ such that 
$X \approx$ \mbox{$X(:,\mathcal{K})$}$H$ with $H \geq 0$.
    \medskip 

\STATE Apply SPA to the input matrix $X$ to extract $r$ indices, but replace step 3 of SPA by the following step:   

\STATE Pick $k^*$ using Algorithm~\ref{alg:divers} with input $(R,d,p,\beta)$. 





\end{algorithmic}
\end{algorithm}  

Let us explain the selection step of RSPA described in Algorithm~\ref{alg:divers}. 
\algsetup{indent=2em}
\begin{algorithm}[ht!]
\caption{Selection step for RSPA \label{alg:divers}}
\begin{algorithmic}[1] 
\REQUIRE 
Input matrix $R$,  
diversification parameter $d \geq 1$, 
error norm parameter $p > 0$, 
diversification parameter $\beta > 1$. 

\ENSURE Index $k^*$. 

    \medskip 
\STATE Let $Y = R$, 
$P_1 = I$. 

\FOR {$i = 1$ : $d$}  
 
\STATE $k(i) = \argmax_j ||Y(:,j)||_2$. 

\STATE $u_i = R(:,k(i)) / ||R(:,k(i))||_2$.  

\STATE $R_i \leftarrow ( I -  u_i u_i^T ) R$.


\STATE $e(i) = \sum_{k} ||R_i(:,k)||_2^p$. 

\STATE $k'(i) = \argmax_k ||R_i(:,k)||_2$.

\STATE Compute $\alpha_i$ as given by \eqref{eq:alpha} (see Lemma~\ref{lem:alpha}) 

with $x = Y(:,k(i))$ and $y = Y(:,k'(i))$. 



\STATE $Y \leftarrow P_{i+1} Y = ( I - \alpha_i u_i u_i^T ) Y$.

\ENDFOR 

\STATE $k^* = k(i^*)$ where $i^* = \argmin_{1 \leq i \leq d} e(i)$. 

\end{algorithmic}
\end{algorithm}  
As opposed to SPA that simply picks the column of the current residual $R$ that maximizes $f$ (step 3 of SPA), RSPA uses $d$ well-chosen quadratic functions $f_i(x)$ $(1 \leq i \leq d)$. Each function $f_i$ will correspond to a candidate column of $R$ with index $k(i) = \argmax_j f_i(R(:,j))$. Among these candidates, RSPA will select the one with the smallest residual after projection onto its orthogonal complement. 
To measure the norm of the residual, 
we use the $\ell_p$ norm of the vector containing the $\ell_2$ norms of the columns of the residual, but many other measures could be used. 
We have observed that using $p=1$ works well, as it is less sensitive to large entries; see Section~\ref{sec:numexp}. 
As long as the functions $\{f_i\}_{i=1}^d$ satisfy Assumption~\ref{fass1}, \emph{RSPA is guaranteed to be robust in low-noise settings}. Moreover, this selection step will be more robust to outliers because outliers lead in general to a smaller decrease in the residual since they are less/not correlated with the data points.

It remains to explain how the functions $\{f_i\}_{i=1}^d$ are generated. 
Note that one could be tempted to generate them randomly but this will most likely still lead to the identification of outliers like in SPA; 
in particular if an outlier has a very large norm. 
For example, if one uses quadratic functions  
$x^T P^T P x = ||Px||_2^2$ where the entries of $P$ are randomly generated, an outlier with a large norm will most likely also have a large value for $||Px||_2^2$.  
Hence we generate $\{f_i\}_{i=1}^d$ so that the associated data points maximizing them are well spread in the data set. To do so, we define 
\[
f_i(x)  
= ||P_i P_{i-1} \dots P_1 x||_2^2 
\; \text{ for } \; 
1 \leq i \leq d, 
\]
where $P_1 = I_m$ and 
$P_{i+1} = I_m - \alpha_i u_i u_i^T$ for some well chosen $\alpha_i \in (0,1)$ and $u_i$ with unit $\ell_2$ norm.   
Since $\alpha_i \in (0,1)$ and $||u_i||_2 = 1$, the matrices $\{P_{i}\}_{i=1}^d$ are positive definite (all eigenvalues are equal to one except for one which is equal to $1-\alpha_i$) hence $\{f_{i}\}_{i=1}^d$ are strongly convex functions. Note that $f_1(x) = ||x||_2^2$ hence, for $d = 1$, RSPA is equivalent to SPA with $f(x) = ||x||_2^2$.  
The matrices $\{P_{i}\}_{i=1}^d$ are chosen such that there is a \emph{diversification} of the data points maximizing the functions $\{f_i\}_{i=1}^d$.   
Our strategy is described in Algorithm~\ref{alg:divers} and works sequentially as follows: 
Given an input matrix $R$ (the residual after $k$ steps of RSPA), for $i=1,2,\dots,d$:  

\noindent \textbf{step 3}. Identify the index $k(i)$ that correspond to the data point $R(:,k(i))$ that maximizes $f_i(x)$. 

\noindent \textbf{steps 4-6}. Let $u_i = R(:,k(i)) / ||R(:,k(i))||_2$, and let $R_i = (I - u_iu_i^T)R$ be the projection of $R$ onto the orthogonal complement of $u_i$. 
The $\ell_p$ norm of the $\ell_2$ norms of the columns of $R_i$, denoted $e(i)$, will allow us to select the index among $\{k(i)\}_{i=1}^d$ such that this norm is minimized (step 11). 

\noindent \textbf{steps 7-9}. Identify the index $k'(i)$ corresponding to the column of $R_i$ with largest norm. 
We choose $\alpha_i$ such that 
\[
f_{i+1}(R(:,k'(i))) = \beta f_{i+1}(R(:,k(i)))  
\text{ with } 
\beta > 1. 
\] The value of $\alpha_i$ that achieves this is given in Lemma~\ref{lem:alpha} (see below). 
This guarantees that, at the next step, $f_{i+1}$ will not identify $k(i)$ again since there is at least one data point with value $\beta$ times larger for $f_{i+1}$. 
Note that we simplify the computation of $f_{i+1}(R(:,k))$ by  introducing the matrix $Y$ initialized as $Y=R$ and updated at each step as $Y \leftarrow P_{i+1}Y$ (step 9) so that $f_{i+1}(R(:,k)) = ||Y(:,k)||_2^2$ for all~$k$.

\begin{lemma} \label{lem:alpha} Let $x$ and $y$ be two non-zero vectors not multiple of one another, $u = x/||x||_2$ and $||x||_2 > ||y||_2$. For $\beta > 1$, 
\begin{equation} \label{eq:alpha}
\alpha^* =
1 - \sqrt{1 - \frac{\beta ||x||_2^2 - ||y||_2^2}{\beta (u^Tx)^2 - (u^T y)^2}}. 
\end{equation} 
is the unique solution for $\alpha \in (0,1)$ of 
\[
||(I- \alpha uu^T) y||_2^2 
= 
\beta  
||(I- \alpha uu^T) x||_2^2. 
\] 
\end{lemma}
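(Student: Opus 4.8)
The plan is to reduce the quadratic-in-$\alpha$ equation to a \emph{linear} equation in the single scalar $\gamma := \alpha(2-\alpha) = 1-(1-\alpha)^2$, solve that, and then invert this change of variables. First I would expand, for an arbitrary vector $z$,
\[
||(I-\alpha uu^T)z||_2^2 = ||z||_2^2 - 2\alpha(u^Tz)^2 + \alpha^2(u^Tz)^2 = ||z||_2^2 - \gamma\,(u^Tz)^2,
\]
using $||u||_2 = 1$. Substituting $z = x$ and $z = y$ turns the equation $||(I-\alpha uu^T)y||_2^2 = \beta\,||(I-\alpha uu^T)x||_2^2$ into
\[
||y||_2^2 - \gamma(u^Ty)^2 = \beta\big(||x||_2^2 - \gamma(u^Tx)^2\big),
\]
which is affine in $\gamma$ and therefore has the unique solution
\[
\gamma^* = \frac{\beta||x||_2^2 - ||y||_2^2}{\beta(u^Tx)^2 - (u^Ty)^2},
\]
provided the denominator is nonzero. (One may note $(u^Tx)^2 = ||x||_2^2$ since $u = x/||x||_2$, but keeping the formula in this form is harmless.)

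The second step is to verify $\gamma^* \in (0,1)$, which is exactly where the three hypotheses enter. The denominator is strictly positive because $(u^Ty)^2 \le ||y||_2^2 < ||x||_2^2 = (u^Tx)^2 \le \beta(u^Tx)^2$, invoking Cauchy--Schwarz, then $||x||_2 > ||y||_2$, then $\beta > 1$. The numerator is strictly positive for the same two reasons: $\beta||x||_2^2 > ||x||_2^2 > ||y||_2^2$. Hence $\gamma^* > 0$. For $\gamma^* < 1$ I would compare numerator and denominator: after substituting $(u^Tx)^2 = ||x||_2^2$ the two differ only in their last term, so $\gamma^* < 1$ is equivalent to $(u^Ty)^2 < ||y||_2^2$, i.e.\ to $y$ not being a scalar multiple of $u$ (equivalently of $x$) --- which is precisely the assumption that $x$ and $y$ are not multiples of one another, making the Cauchy--Schwarz inequality strict.

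Finally I would invert the substitution. On $(0,1)$ the map $\alpha \mapsto \gamma = \alpha(2-\alpha)$ has derivative $2(1-\alpha) > 0$ and sends $(0,1)$ bijectively onto $(0,1)$; solving $\alpha^2 - 2\alpha + \gamma = 0$ gives the two roots $1 \pm \sqrt{1-\gamma}$, of which only $1 - \sqrt{1-\gamma}$ lies in $(0,1)$. Since $\gamma^* \in (0,1)$, the original equation has exactly one solution in $(0,1)$, namely $\alpha^* = 1 - \sqrt{1-\gamma^*}$, which is the formula \eqref{eq:alpha}.

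I do not expect a serious obstacle: the computation is short and elementary. The one point needing care --- and the only place where all three hypotheses are genuinely used --- is the verification $\gamma^* \in (0,1)$, in particular the strict inequality $\gamma^* < 1$, which forces the appeal to the ``not multiples of one another'' assumption via strict Cauchy--Schwarz; without it the construction could yield $\alpha^* = 1$ (a singular, rank-deficient projection) or no admissible $\alpha$ at all.
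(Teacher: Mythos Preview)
Your proof is correct and follows essentially the same route as the paper: both expand $\|(I-\alpha uu^T)z\|_2^2 = \|z\|_2^2 - (2\alpha-\alpha^2)(u^Tz)^2$, reduce to a quadratic in $\alpha$ with leading coefficient $\Delta = \beta(u^Tx)^2 - (u^Ty)^2$ and constant term $\beta\|x\|_2^2 - \|y\|_2^2$, and pick the root in $(0,1)$. Your substitution $\gamma = \alpha(2-\alpha)$ is a tidy repackaging of this same quadratic, and your explicit use of strict Cauchy--Schwarz (via the ``not multiples'' hypothesis) to get $\gamma^* < 1$ is actually more careful than the paper, which only writes $u^Ty \le \|y\|_2$ and so, as stated, does not quite exclude $\alpha^* = 1$.
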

\begin{proof}
Using $||(I- \alpha uu^T) z||_2^2
=
||z||_2^2 - 
2 \alpha (u^T z)^2 
+ \alpha^2 (u^T z)^2$,   
the solution of the above problem is a root of 
$\Delta \alpha^2
- 2  \Delta \alpha
+ 
\beta ||x||_2^2 - ||y||_2^2$ 
where $\Delta = \beta (u^Tx)^2 - (u^T y)^2$. Note that $u^Tx = ||x||_2 > u^T y$ since $u^T y \leq ||y||_2 < ||x||_2$ hence $\Delta > \beta ||x||_2^2 - ||y||_2^2$. 
We obtain 
$\alpha^*   
= 
\frac{ \Delta - \sqrt{\Delta^2 - \Delta (\beta ||x||_2^2 - ||y||_2^2) } }{ \Delta } \in (0,1)$. 
\end{proof}

\paragraph{Computational cost} It can be checked that SPA runs in $O(mnr)$ operations~\cite{gillis2014fast}, 
while RSPA runs in $O(mnrd)$. The main computational cost lies in the selection and projection steps, each in $O(mn)$ operations. 

\section{Numerical experiments} \label{sec:numexp}

We write RSPA($d$,$p$,$\beta$) to refer to RSPA with parameters ($d$,$p$,$\beta$). 
The code is available from 
\begin{center}
\url{https://sites.google.com/site/nicolasgillis/code}. 
\end{center}


\subsection{Synthetic data sets} 

Let $r=10$, $n=1000$ and  the value of $m$ is varied from 10 to 50. 
Each entry of $W \in \mathbb{R}^{m \times r}$ is generated randomly using the uniform distribution $\mathcal{U}(0,1)$ in the interval $[0,1]$. Each entry of $H \in \mathbb{R}^{r \times n-r}$ is generated in the same way, but then each column of $H$ is normalized so that Assumption~\ref{asssep} holds: $H(:,j) \leftarrow H(:,j)/||H(:,j)||_1$. Finally, we take 
$X = W [I_r, H]$ to which we add 10 outliers whose entries are generated randomly using the normal distribution $\mathcal{N}(0,1)$ of mean zero and variance 1. 
For each value of $m$, we generate 100 such matrices. 
Figure~\ref{synt} reports the percentage of correctly identified columns of $W$ by SPA and by RSPA with various combinations of the parameters, with $d\in\{10,20,40,80\}$, $p\in\{1,2\}$ and $\beta \in \{2,4,8\}$.  
\begin{figure}[ht!] 
\centering 
   \includegraphics[width=0.75\textwidth]{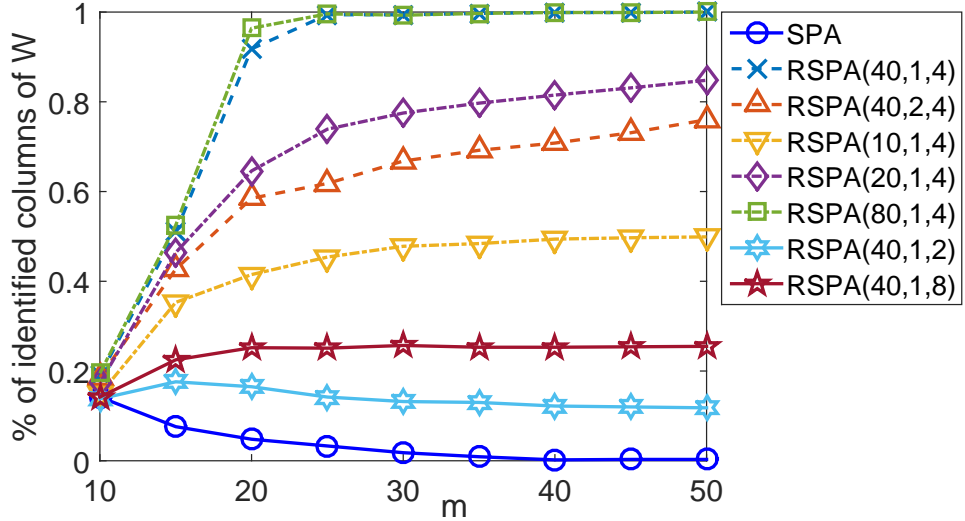} 
\caption{Percentage of recovered columns of $W$ for the synthetic data sets using SPA and RSPA.}
\label{synt}
\end{figure}

We observe the following: 

\begin{itemize}

\item When $m$ is small, no algorithm is able to recover all columns of $W$. The reason is that the outliers and the columns of $W$ are less separated (they are linearly dependant for example when $m=r=10$). 

\item SPA performs very poorly. 
The reason is that the norm of the outliers is larger than that of the columns of $W$; in fact, $\mathbb{E}(x^2) = 1/3$ for $x \sim \mathcal{U}(0,1)$ while $\mathbb{E}(x^2) = 1$ for $x \sim \mathcal{N}(0,1)$. 
Note that any other greedy algorithm such as VCA~\cite{nascimento2005vertex} or the successive nonnegative projection algorithm (SNPA)~\cite{gillis2014successive} would fail as well.  

\item RSPA does not perform well when $p=2$ as it is more sensitive to large entries in the residual hence to outliers. We have also tried $p=0.5$ and it performed similarly as $p=1$. 

\item RSPA performs best for $\beta = 4$. 
The parameter $\beta$ should not be chosen too large as it  makes $P_i$ ill conditioned since $\alpha_i$ will be close to 1, nor too small as it does not provide a good diversification. 

\item RSPA does not perform well when $d$ is too small ($d \leq 20$): in that case, RSPA is not able to avoid outliers. 

\end{itemize}

To summarize, RSPA performs best when $d \geq 40$, $p=1$ and $\beta = 4$: more than 99\% of the columns of $W$ are correctly identified for $m \geq 25$. 
Ideally, $d$ should be of the order of the number of outliers. In fact, if $d$ is smaller than the number of outliers, the diversification procedure could only identify outliers hence fail. This is particularly crucial when the outliers have a larger norm than the inliers (see also the experiment on the San Diego hyperspectral image). 
However, choosing $d$ too large makes RSPA slower as it runs in $O(mnrd)$ operations.

\subsection{Hyperspectral images} 

Let us compare SPA and RSPA on three widely used hyperspectral images (HSIs) that are described for example in~\cite{gillis2014hierarchical}:  

\begin{itemize}

\item Urban: 162 spectral bands, 
$307 \times 307$ pixels and 6 endmembers. 
It contains a few outliers that correspond to materials present in small proportions. 

\item San Diego: 158 spectral bands, 
$400 \times 400$ pixels and 8 endmembers. It contains quite a few outliers (see Figure~\ref{hsi_sd}). 

\item Cuprite: 188 spectral bands, 
$250 \times 191$ pixels and 15 endmembers. 
It does not contain outliers. 

\end{itemize}

Table~\ref{tab:hsi} reports the relative approximation error 
\begin{equation} \label{eq:re}
\min_{H \geq 0} \frac{||X - X(:,\mathcal{K})H||_F}{||X||_F}
\end{equation}
for the index sets $\mathcal{K}$ extracted by SPA,  RSPA(10,1,4) and RSPA(20,1,4). 
\begin{table}[ht!]
\caption{
Relative error~\eqref{eq:re} and, in brackets, computational time in seconds of SPA, RSPA(10,1,4) and RSPA(20,1,4) applied on three hyperspectral images. \label{tab:hsi}
} 
 \begin{center}
 \begin{tabular}{|c|c|c|c|}
 \hline 
    &      Urban  &  San Diego &  Cuprite \\   \hline  
SPA          
& 9.58   (1.4) &  12.62 \; (2.9)   &  1.83 (1.9) \\   
RSPA(10,1,4) 
& 7.65 (31) &   \; 6.63 \; (64)   &   1.78 \ (52) \\  
RSPA(20,1,4) 
& 6.66 (59) &   \; 6.03  (124)   &   1.83 \ (83)
 \\ \hline 
\end{tabular}
\end{center}
\end{table}

\begin{figure*}[ht!] 
\centering 
   \includegraphics[width=\textwidth]{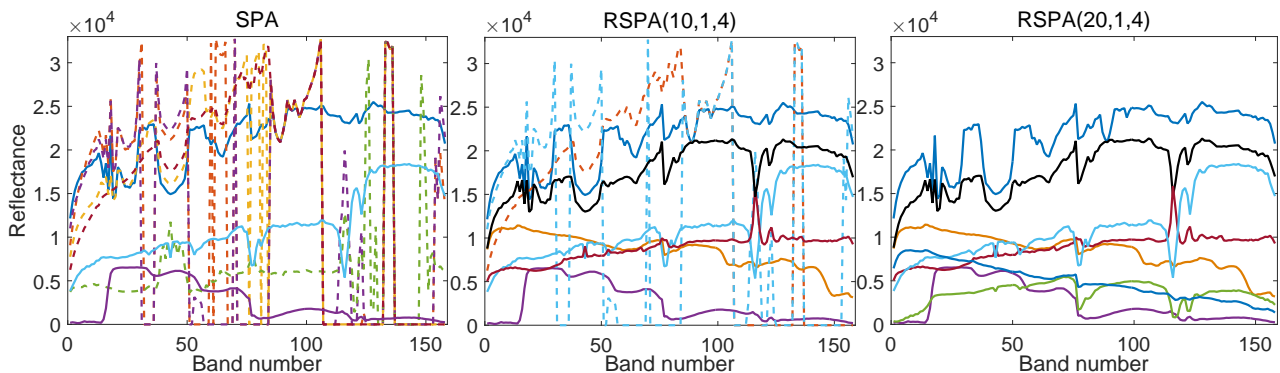} 
\caption{Columns of $X$ extracted by SPA, RSPA(10,1,4) and RSPA(20,1,4) on the San Diego HSI. Dashed lines represent outliers.}
\label{hsi_sd}
\end{figure*}

We observe the following: 

\begin{itemize}

\item For Urban, RSPA variants allow to slightly reduce the relative error compared to SPA. 
The gain is appreciable (from 9.58\% to 7.65\% for RSPA(10,1,4) and to 6.66\% for RSPA(20,1,4)) but not significant because the outliers are endmembers corresponding to materials present in small proportions and sharing  similarities with the main endmembers. 

\item For San Diego, RSPA allows a significant reduction of the relative error; from 12.62\% to 6.63\% for RSPA(10,1,4) and to 6.03\% for RSPA(20,1,4). 
Figure~\ref{hsi_sd} displays the 8 endmembers extracted by each algorithm. We observe on Figure~\ref{hsi_sd} that SPA identifies 5 outliers, 
RSPA(10,1,4) only 2, and RSPA(20,1,4) none. 
This confirms our observations made on synthetic data sets: the parameter $d$ should be chosen properly so as to allow RSPA to avoid extracting outliers. Note however that the relative errors of RSPA(10,1,4) and RSPA(20,1,4) are relatively close because the outliers have a very large norm. 

\item  For Cuprite, SPA and RSPA  provide comparable results because of the absence of outliers. 


\item In terms of computational time, RSPA is between $2d$ to $3d$ times slower than SPA. This is expected since RSPA requires $O(d)$ times more operations than SPA. 

\end{itemize}

\section{Conclusion} 

We have proposed a new variant of SPA, namely Robust SPA (RSPA), which is robust to outliers by taking into account the residual error to identify important columns in the data set, while remaining robust in low-noise settings.  
We have illustrated the effectiveness of RSPA on synthetic data sets and hyperspectral images. 
A similar enhancement could be brought to other greedy separable NMF algorithms, such as VCA and SNPA. 
Further work includes a thorough analysis of the behavior of RSPA under different choices of the parameters in various conditions, as well as a rigorous robustness analysis of RSPA with explicit error bounds depending on the noise level and the number of outliers. 

\small 

\bibliographystyle{spmpsci} 
\bibliography{Biography}

\end{document}